\newtheorem*{lemma*}{Lemma}
\newtheorem*{def*}{Definition}
\newtheorem{thm}{Theorem}
\newtheorem{lemma}[thm]{Lemma}
\def\g{\gamma}
\def\e{\epsilon}
\def\j/{\varphi}
\def\l{\lambda}
\def\m{\mu}
\def\s{\sigma}
\def\j{\varphi}
\def\q{\theta}
\def\r{\rho}
\def\n{\nu}
\def\ox{\otimes}
\def\>{\rangle}
\begin{document}
\title{Hilbert space separability and the Einstein-Podolsky-Rosen state}
\author{Miguel Gallego}
\email{miguel.gallego.ballester@univie.ac.at}
\affiliation{University of Vienna, Faculty of Physics, Vienna Center for Quantum Science and Technology, Boltzmanngasse 5, 1090 Vienna, Austria}
\date{\today}

\begin{abstract}
Quantum mechanics is formulated on a Hilbert space that is assumed to be separable. However, there seems to be no clear reason justifying this assumption. Does it have physical implications? We answer in the positive by proposing a test that witnesses the non-separability of the Hilbert space, at the expense of requiring measurements with uncountably many outcomes. In the search for a less elusive manifestation of non-separability, we consider the original Einstein-Podolsky-Rosen (EPR) state as a candidate for possessing nonlocal correlations stronger than any state in a separable Hilbert space. Nevertheless, we show that, under mild assumptions, this state is not a vector in any bipartite space, even non-separable, and therefore cannot be described within the standard Hilbert space formalism. 
\end{abstract}

\maketitle

Since von Neumann \cite{vonneumann}, quantum theory is formulated in terms of a complex Hilbert space that is assumed to be \emph{separable}, meaning that it contains a countable dense subset or that, equivalently, it admits a countable orthonormal basis. Essentially, the dimension of the Hilbert space is taken to be either finite, in which case it is isomorphic to $\mathbb{C}^n$ for some $n\in \mathbb{N}$, or countably infinite, in which case it is isomorphic to the space of square-summable sequences $\ell^2(\mathbb{N})$, in turn isomorphic to the space of (equivalence classes of) square-integrable functions $L^2(\mathbb{R})$. But no clear physical reason seems to motivate the assumption of separability, and even though the attempt to treat quantum systems with non-separable Hilbert spaces may raise problems that are not only technical \cite{earman}, nothing seems to fully justify their dismissal.

In a paper of Halvorson \cite{halvorsoncomplementarity}, the author discusses two representations of the canonical commutation relations on $\ell^2(\mathbb{R})$, the non-separable Hilbert space of square-summable functions from $\mathbb{R}$ into $\mathbb{C}$: one where the position operator has eigenstates, at the expense that the momentum operator does not exist, and one where the momentum operator has eigenstates, at the expense that the position operator does not exist. This is unlike the Schrödinger representation in $L^2(\mathbb{R})$, where both operators exist and there are states with arbitrarily small spread in either position or momentum, but never zero. Inspired by Halvorson's work, we present two main results. In the first part we propose a simple \emph{gedankenexperiment} for which separable and non-separable Hilbert spaces make different physical predictions. The experiment has the form of a prepare-and-measure scenario which works as a dimension witness, distinguishing between countable and uncountable dimension. Its downside is that the measurements involved are required to resolve an uncountable number of outcomes, which is usually thought to be impossible. One could then look for other tests of non-separability, like for instance in Bell-type experiments \cite{bell}. Namely, one could ask: are quantum correlations in non-separable Hilbert spaces stronger than those in separable spaces or, in other words, do they violate  Tsirelson's bound \cite{tsirelson80, tsirelson87}? As far as we know, this is an open question. A potential candidate \cite{halvorsonepr} for such ``stronger-than-separable" correlations is the original EPR state \cite{epr}, a simultaneous eigenstate of the sum of two parties' positions and the difference of their momenta. Being an eigenstate of continuous quantities, however, it cannot be realized as a vector in a separable Hilbert space. But given the results of Halvorson \cite{halvorsoncomplementarity}, one may think that it can be realized as a vector in a non-separable Hilbert space, just like a simultaneous eigenstate of, for example, the first party's position and the second party's momentum can be constructed in a tensor product of two of Halvorson's representations. In the second part of the paper we show that the EPR state is not a vector in any tensor product representation where one of the factors is a Halvorson-type representation (one where some phase-space observable has eigenvectors), and hence cannot be represented in the standard Hilbert space formalism. The conjectured stronger-than-separable correlations must therefore be looked for elsewhere.

\section{A test of Hilbert space separability}

Consider the following prepare-and-measure scenario (see Figure \ref{fig}). Alice receives an input $x$ drawn at random from a set $X$. She then prepares a quantum system in a state $\psi_x \in \mathcal{H}$, where $\mathcal{H}$ is a Hilbert space, and sends it to Bob, who measures the system obtaining an output $y$ in the same set $X$ as Alice. Let $G$ denote the average probability that Bob guesses Alice's input, i.e. that $y=x$. Then, in the case where the set $X$ has finite size $|X| \in \mathbb{N}$, the quantity $G$ can be used as a dimension witness as follows.  If the Hilbert space $\mathcal{H}$ has dimension $n \geq |X|$, Alice can encode her input in one of a number $|X|$ of orthogonal states, which Bob can distinguish perfectly. This strategy thus gives $G=1$. If, however, the dimension is $n < |X|$, then Bob cannot distinguish with unit probability between $|X|$ different states in $\mathcal{H}$, and $G<1$. In particular, $G \leq n/|X|$, as shown in \cite{brunnernavascues}. Our idea is to generalize this result in order to witness an uncountable dimension.

\begin{figure}
        \centering        \includegraphics[width=0.3\textwidth]{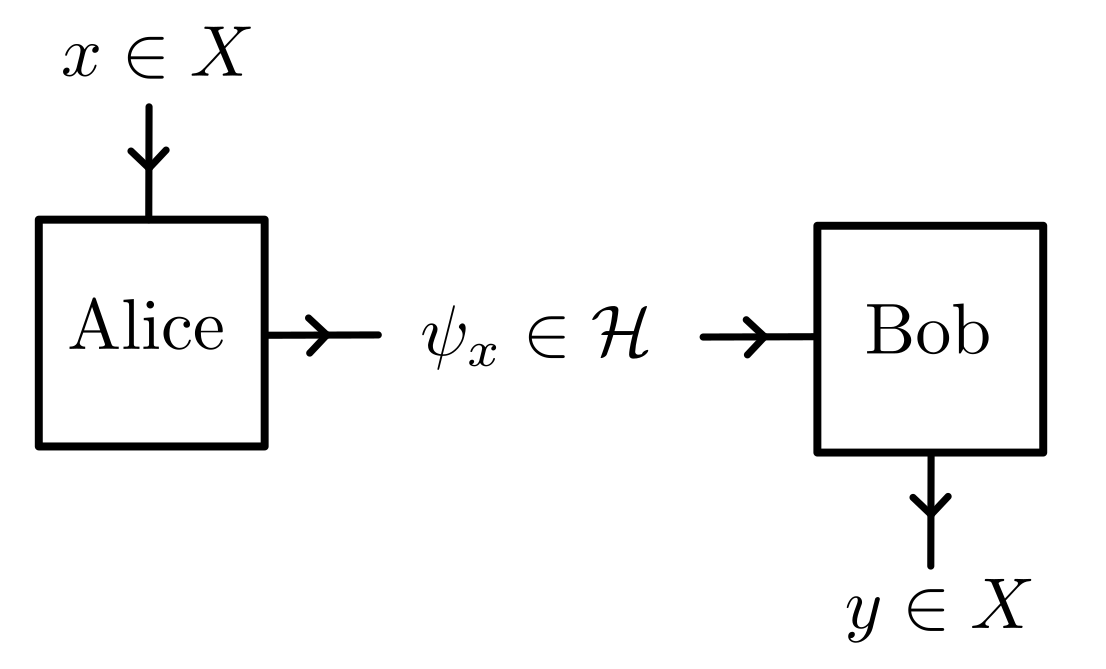}
        \caption{\textbf{Prepare-and-measure scenario.} Alice receives an input $x\in X$, prepares a quantum system in a state $\psi_x \in \mathcal{H}$ and sends it to Bob, who performs a measurement on the system, obtaining an outcome $y \in X$ in the same set as Alice.}
        \label{fig}
\end{figure}

For this purpose, let $X$ be \emph{uncountable} and let us turn it into a measurable space $(X, \Sigma)$ by equipping it with a $\sigma$-algebra $\Sigma$ of subsets. Moreover, let us model Bob's measurement on the state $\psi_x \in \mathcal{H}$ by means of a \emph{positive operator-valued measure} 
(POVM) on the same measurable space $(X,\Sigma)$ as Alice, i.e. a function $E: \Sigma \to \mathfrak{B}(\mathcal{H})$ taking measurable sets to bounded linear operators on $\mathcal{H}$ such that the map $S \mapsto \langle  \psi ,  E(S) \psi\rangle$ defines a probability measure on $(X,\Sigma)$ for every unit vector $\psi \in \mathcal{H}$. Then, the probability that Bob's outcome $y$ lies in $S \in \Sigma$ given that he receives the state $\psi_x$ is
\begin{align*}
    \text{Pr}(y \in S \, | \, \psi_x ) = \langle \psi_x, E(S) \psi_x  \rangle \, .
\end{align*}
Now, assuming that all singletons are measurable, i.e. that $\{ x \} \in \Sigma$ for all $x \in X$, we can define the probability that Bob guesses Alice's input $x$ as $g(x) := \text{Pr}(y = x | \psi_x )$ or, in terms of the POVM elements,
\begin{align*}
    g(x) = \langle \psi_x , E(\{x \} ) \psi_x  \rangle \, .
\end{align*}
The average guessing probability would be the integral of the function $g$ against some probability measure on $(X,\Sigma)$, but we state our result directly in terms of $g$:
\begin{thm}
    If the Hilbert space $\mathcal{H}$ is separable, then $g=0$ except possibly on a countable subset, while $g \equiv 1$ (identically) can be attained in a non-separable Hilbert space.
\label{thm:test}
\end{thm}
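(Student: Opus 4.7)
The plan is to show that in the separable case $g$ is concentrated on the atoms of an auxiliary scalar probability measure built from a faithful state, while the non-separable direction is witnessed by an orthonormal family indexed by $X$.

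For the separable direction, I would fix a countable orthonormal basis $\{e_n\}_{n \in \mathbb{N}}$ of $\mathcal{H}$ and form the faithful trace-class density operator $\rho = \sum_n 2^{-n} |e_n\rangle\langle e_n|$. Define the set function $\mu(S) := \mathrm{tr}(\rho E(S)) = \sum_n 2^{-n} \langle e_n, E(S) e_n\rangle$. Since each $S \mapsto \langle e_n, E(S) e_n\rangle$ is a probability measure on $(X,\Sigma)$ by hypothesis, a standard interchange of summations (using Fubini for counting measure) shows that $\mu$ is itself a probability measure. The atoms of any probability measure form an at most countable set, because the sum of their masses is bounded by $1$.

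The key step is then $\{x : g(x) > 0\} \subseteq \{x : \mu(\{x\}) > 0\}$. If $g(x) > 0$, then the positive operator $E(\{x\})$ is nonzero, so $E(\{x\})^{1/2} \neq 0$, and hence $E(\{x\})^{1/2} e_n \neq 0$ for at least one basis vector $e_n$; this gives $\langle e_n, E(\{x\}) e_n \rangle = \|E(\{x\})^{1/2} e_n\|^2 > 0$, and therefore $\mu(\{x\}) > 0$. Combining this with the previous step, $\{g > 0\}$ is countable.

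For the non-separable direction, I would take $\mathcal{H} = \ell^2(X)$ with its canonical orthonormal basis $\{e_x\}_{x \in X}$ (non-separable because $X$ is uncountable), set $\psi_x := e_x$, and let $E(S)$ be the orthogonal projection onto $\overline{\mathrm{span}}\{e_x : x \in S\}$. That $E$ is a (projection-valued) POVM follows from Parseval applied to $\{e_x\}$ together with the observation that for each fixed $\psi \in \mathcal{H}$ only countably many of the coefficients $\langle \psi, e_x\rangle$ are nonzero, so $\sigma$-additivity in the weak sense is immediate. With this choice $g(x) = \langle e_x, E(\{x\}) e_x\rangle = 1$ for every $x \in X$.

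The main obstacle is precisely the bridge $g(x) > 0 \Rightarrow \mu(\{x\}) > 0$: this is where separability is genuinely used, since it is the existence of a faithful normal state $\rho$ that fails for general non-separable Hilbert spaces, and that failure is exactly what lets the $\ell^2(X)$ construction above evade the countability bound.
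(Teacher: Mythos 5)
Your proof is correct, and the non-separable direction coincides with the paper's construction ($\ell^2(X)$ with the canonical orthonormal family and the projection-valued measure onto closed spans). For the separable direction, however, you take a genuinely different route. The paper argues by pigeonhole on a countable \emph{dense subset}: for each $x$ with $E(\{x\})\neq 0$ it picks an open set $U_x$ of vectors witnessing positivity, notes that the dense set must meet every $U_x$, and derives a contradiction because a single unit vector $\chi_n$ would then assign positive mass to uncountably many singletons, violating normalization of the measure $S\mapsto\langle\chi_n,E(S)\chi_n\rangle$. You instead package separability into a single faithful normal state $\rho=\sum_n 2^{-n}\lvert e_n\rangle\langle e_n\rvert$ built from a countable orthonormal \emph{basis}, observe that $\mu(\cdot)=\mathrm{tr}(\rho\,E(\cdot))$ is one scalar probability measure whose atoms are countable, and use the square-root trick to show $E(\{x\})\neq 0\Rightarrow\mu(\{x\})>0$. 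Your version is arguably the cleaner abstraction --- it isolates exactly the feature of separability that matters (existence of a faithful normal state on $\mathfrak{B}(\mathcal{H})$, which indeed fails in the non-separable case and explains why the $\ell^2(X)$ strategy evades the bound), and it replaces the pigeonhole/topological argument with the single standard fact that a probability measure has at most countably many atoms. The paper's version is more elementary in that it needs neither operator square roots nor trace-class machinery. The only cosmetic points to fix are the normalization of the weights (use $\sum_{n\geq 1}2^{-n}=1$ or renormalize) and a remark covering the finite-dimensional case, both trivial.
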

\begin{proof}
First, let $\mathcal{H} = \ell^2(X)$ be the space of functions $\psi: X \to \mathbb{C}$ such that $\psi(x) \neq 0$ for countably many $x\in X$ and $\sum_{x \in X} |\psi(x)|^2 < \infty$. This forms a non-separable Hilbert space with the inner product $\langle \psi, \phi \rangle := \sum_{x\in X}  \psi^*(x) \phi(x)$. Upon receiving the input $x\in X$, let Alice prepare the state $\psi_x \in \ell^2 (X)$ given by the characteristic function of $\{ x \}$, i.e
\begin{align*}
    \psi_x (x') = \begin{cases} 1 & \text{ if } x' = x  \, ,\\
    0 & \text{ otherwise} \, .
    \end{cases}
\end{align*}
Let Bob implement a POVM where $E(S)$ is the projection onto the closed linear span of $\{ \psi_x \}_{x \in S}$.\footnote{Indeed, this defines a probability measure: since $\{ \psi_x \}_{x \in X}$ is an orthonormal basis for $\ell^2 (X)$, the projection $E(S)$ for $S$ a disjoint union of sets $S_i$ is the sum of projections $E(S_i)$, so it defines a measure. Moreover, $E(X)$ is the identity in $\ell^2 (X)$, so it is normalized.} Then, for each input $x\in X$, the guessing probability is
\begin{align*}
    g(x)  = \langle \psi_x ,  E(\{ x \} ) \psi_x  \rangle \equiv 1 \, .
\end{align*}

On the other hand, let $\mathcal{H}$ be separable. We will show that the set $\{ g \neq 0 \}: = \{ x \in X : g(x) \neq 0 \}$ is countable. Let $ \{ E \neq 0 \}:= \{ x \in X : E(\{x\}) \neq 0 \}$, which clearly contains $ \{ g \neq 0\}$. By positivity of $E(\{x \})$, for every $x \in \{ E \neq 0 \}$ there exists some $\phi_x \in \mathcal{H}$ such that $\langle \phi_x , E(\{x \}) \phi_x \rangle > 0$. By continuity of $E(\{x \})$, for every such $\phi_x$ there exists an open set $U_x \subset \mathcal{H}$ containing $\phi_x$ such that $\langle \phi  ,  E(\{x \}) \phi \rangle > 0$ for all $\phi \in U_x$. By separability of $\mathcal{H}$, there exists a countable dense subset $\{ \chi_n \}_{n \in \mathbb{N}} \subseteq \mathcal{H}$, which we assume (without loss of generality) to have unit norm. Being dense, this subset necessarily has non-empty intersection with each $U_x$, so that to each $x \in \{ E \neq 0 \}$ we can assign $n \in \mathbb{N}$ such that $\chi_n \in U_x$. Now suppose the set $\{ E\neq 0 \}$ is uncountable. Then, uncountably many values of $x$ have to be assigned to the same $n$, so that, for uncountably many $x$, we have $\chi_n \in U_x$, and in particular $\langle \chi_n  , E(\{ x\} ) \chi_n \rangle > 0$. But this means that the map $S \mapsto \langle \chi_n , E(S) \chi_n \rangle$ cannot define a normalized measure, since the sum of $\langle \chi_n ,  E( \{ x \} ) \chi_n \rangle$ over $\{E\neq 0\}$ diverges. In conclusion, $\{ E \neq 0 \}$ must be countable, and so must be $\{ g \neq 0 \}$.\footnote{This result is not altered if Alice encodes her message into mixed states, described by density matrices $\rho_x$, since in this case the guessing probability is given by $g(x) = \text{tr} \, E(\{ x \} ) \rho_x$ and  anyway $\{ g \neq 0\} \subseteq \{ E \neq 0 \}$ holds.}
\end{proof}

Several comments are in order. First, notice that, if a system has a separable Hilbert space $\mathcal{H}$, then the space $\mathcal{H}^{\otimes n}$ associated to a finite number $n$ of them is also separable, and so is the Fock space $\oplus_{n\in \mathbb{N}} \mathcal{H}^{\otimes n}$. However, as long as $\dim{\mathcal{H}}\geq 2$, the Hilbert space $\otimes_{n \in \mathbb{N}} \mathcal{H}$ associated to an infinite number of particles (also called \emph{infinite spin chain}) is non-separable with dimension $2^{\aleph_0}$ \cite{earman}. Therefore, the above prepare-and-measure scenario can be reformulated as a witness of the non-finiteness of the spin chain. Namely, let the set $X$ have cardinality $2^{\aleph_0}$, and let the state $\psi_x$ sent to Bob be the state of a spin chain, where each spin has a separable Hilbert space $\mathcal{H}$. We then know that $g=0$ (except possibly on a countable subset) if the chain is finite, while $g\equiv 1$ can be attained if it is infinite by translating the strategy given above from $\ell^2(X)$ to $\otimes_{n \in \mathbb{N}} \mathcal{H}$, which essentially ammounts to choosing a bijection between $X$ and an orthonormal basis of $\otimes_{n \in \mathbb{N}} \mathcal{H}$.

Second, the content of Theorem \ref{thm:test} is not specifically about quantum mechanics, but rather about the amount of available distinguishable states. Is it possible to formulate and prove a similar but more general statement, applying both to classical and quantum mechanics? Another paper by Halvorson \cite{halvorsoncontinuous} might provide a key for this. As he argues, the logic of experimental propositions for one continuous quantity (like position or momentum) is equal both in classical and quantum mechanics, and particles possessing a sharp value of this quantity can be described in both cases by singular (i.e. finitely additive but not $\s$-additive) states. Thus, reformulating the prepare-and-measure scenario described before in terms of the logic of experimental propositions for some continuous physical quantity, a possible generalization of our result would be something like:
\begin{quote}
    `If the states are $\s$-additive, then $g=0$ (except possibly on a countable subset), while $g\equiv 1$ (identically) can be attained with singular states.'
\end{quote}
We do not attempt to prove this here.

Finally, as we have announced, the proposed experiment looks rather idealized, since Alice is required to have access to an uncountable number of preparations and Bob is required to perfectly distinguish between them or, reformulated in terms of the infinite spin chain, they need to be able to prepare and measure a (countable) infinity of particles. It may then seem that the result is not robust, in the sense that the gap will disappear if we relax the conditions of the game and allow Bob to guess Alice's input to within some finite accuracy, i.e. counting as correct all of Bob's guesses at a distance $\epsilon > 0$ from Alice's input. But, strictly speaking, this depends on which metric is used to measure the distance $\e$. Let us therefore equip the set $X$ with a metric $d: X \times X \to \mathbb {R}$ and define the guessing probability as $g_\epsilon(x) := \text{Pr} ( y \in B_\epsilon (x) | \psi_x )$, where $B_\epsilon(x) = \{ x' \in X : \, d(x,x')< \epsilon \}$ is the open ball of radius $\epsilon$ centered at $x$. Then, if the metric space $(X,d)$ is \emph{separable} (as is for instance the set of real numbers with the usual metric), $g_\epsilon \equiv 1$ can be attained on a separable Hilbert space, since countably many orthogonal states suffice to encode a dense subset of $(X,d)$, so that the gap disappears. If, on the other hand, the space $(X,d)$ is \emph{not} separable (however esoteric that may seem\footnote{Two examples of such non-separable metric spaces are $\mathbb{R}$ with the discrete metric (assigning $0$ to the distance of any point to itself and $1$ to the distance between any two distinct points) or the space of bounded sequences of complex numbers $\ell^\infty(\mathbb{N})$ with the usual metric induced by the norm $\| x \|_\infty = \sup_{n \in \mathbb{N}} | x_n |$.}), it can be shown that any strategy on a separable Hilbert space gives $g_\epsilon =0$ (except possibly on a countable subset), making the witness robust. But again, there is nothing inherently quantum here: the same argument can be used in favor of a potential experiment that could determine that the value of a certain continuous classical quantity was some exact real number. This is generally accepted to be impossible. As Birkhoff and von Neumann \cite{birkhoffvonneumann} put it: “\emph{how absurd it would be to call an `experimental proposition,' the assertion that the angular momentum of the earth around the sun was at a particular instant a rational number!}”. To us, however, this statement (see also \cite{vonneumannoperatorenmethode}) is not obvious, and we believe a more careful analysis of the assumptions involved would be necessary to gain a deeper understanding.

\section{Is the EPR state a vector in some Hilbert space?}
As a possible alternative to witness the non-separability of the Hilbert space, we consider the potential stronger-than-separable quantum correlations of the EPR state. This state can be defined as the algebraic state that assigns dispersion-free values to the sum of two parties' position operators and the difference of their momenta, and it can be shown to be unique and pure \cite{halvorsonepr}. The EPR state can be straightforwardly represented as a vector in the (non-separable) Hilbert space $\ell^2(\mathbb{R}) \ox \ell^2(\mathbb{R})$ using the Gelfand-Naimark-Segal construction \cite{kadisonringrose}. But this is unsatisfactory, since in this case one Hilbert space corresponds to the degree of freedom of the sum of the positions and the other corresponds to the degree of freedom of the difference of the momenta, and therefore the tensor product between the two Hilbert spaces does not correspond to the system bipartition \cite{huang}. We therefore regard this case as ``trivial'', and we ask (more precisely) whether the EPR state a is vector in some \emph{non-trivial} or \emph{tensor product representation}, i.e. one such that Alice's operators act trivially on Bob's Hilbert space and and vice versa. As far as we know, this is an open question. Here we show that in any such tensor product representation, if one of the factors is a Halvorson-type representation (one where some phase-space operator has eigenvectors) then the EPR does not exist as a vector. This suggests that the EPR state cannot be described within the standard Hilbert space formalism, and therefore it cannot be used to witness the non-separability of the Hilbert space. 

In order to define the EPR state in Hilbert space, let $W(a,0)$ and $W(0,b)$ be one-parameter (not necessarily continuous) families of unitary bounded operators satisfying the Weyl \emph{canonical commutation relations} (CCRs):
\begin{align*}
    W(a_1, a_2) W(b_1, b_2) = e^{i(a_1b_2-a_2 b_1)/2} \, W(a_1+b_1, a_2+b_2) \, ,
\end{align*}
for all $a_1, a_2, b_1, b_2 \in \mathbb{R}$, so that $W(1,0)$ corresponds to the exponential of the position operator and $W(0,1)$ to the exponential of the momentum operator (in case these exist). Then, if $\mathcal{H}_A$ and $\mathcal{H}_B$ are Hilbert spaces corresponding to two parties, we say a vector $\psi \in \mathcal{H}_A \ox \mathcal{H}_B$ is an EPR state (in a tensor product representation of the CCRs) if
\begin{enumerate}[label=(\textit{\roman*})]
    \item $W_A(a,0) \otimes W_B(a,0) \psi = e^{i a x} \psi \, , \quad \forall a \in \mathbb{R}$ ,
    \label{condii}
    \item $W_A(0,b) \otimes W_B(0,-b) \psi = e^{i b p} \psi \, , \quad \forall b \in \mathbb{R}$ ,
    \label{condiii}
\end{enumerate}
for some $x,p \in \mathbb{R}$ (corresponding, respectively, to the value of the sum of positions and the difference of momenta) \cite{halvorsonepr, huang}.

\begin{thm}
    If, for some $\q \in [0, 2 \pi]$, the operator $W_A(\cos \q, \sin \q)$ has eigenvectors in $\mathcal{H}_A$ and these form an orthonormal basis, then there is no EPR state in $\mathcal{H}_A \ox \mathcal{H}_B$. 
\label{thm:epr}
\end{thm}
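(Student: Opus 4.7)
The plan is to combine the two EPR conditions into a single family of eigenvalue equations via the Weyl CCRs, and then use the eigenstructure of $W_A(\cos\q,\sin\q)$ to force an $S^1$-invariance on the component norms of $\psi$ that contradicts its normalizability. Combining (\textit{i}) and (\textit{ii}) and using $W_A(a,0)W_A(0,b)=e^{iab/2}W_A(a,b)$ together with $W_B(a,0)W_B(0,-b)=e^{-iab/2}W_B(a,-b)$, the stray phases cancel and one obtains $W_A(a,b)\otimes W_B(a,-b)\,\psi=e^{i(ax+bp)}\,\psi$ for every $(a,b)\in\mathbb{R}^2$.

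Next I would analyze how the Weyl operators interact with $W_A(\cos\q,\sin\q)$. The CCR yields $W_A(\cos\q,\sin\q)\,W_A(a,b)=e^{i\d}\,W_A(a,b)\,W_A(\cos\q,\sin\q)$ with $\d:=b\cos\q-a\sin\q$, so $W_A(a,b)$ restricts to a unitary isomorphism from the $\l$-eigenspace $E_\l$ of $W_A(\cos\q,\sin\q)$ onto the $(e^{i\d}\l)$-eigenspace, the inverse $W_A(-a,-b)$ realising the reverse shift. Since $\d$ exhausts $\mathbb{R}$ as $(a,b)$ varies, the spectrum of $W_A(\cos\q,\sin\q)$ is invariant under every rotation of $S^1$ and therefore equals $S^1$.

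Now decompose $\psi=\sum_{\l\in S^1}\xi_\l$ orthogonally with $\xi_\l\in E_\l\otimes\mathcal{H}_B$, using the eigenspace decomposition $\mathcal{H}_A=\overline{\bigoplus_\l E_\l}$. The combined EPR relation sends the $\l$-summand to the $(e^{i\d}\l)$-summand, and matching components yields $(W_A(a,b)\otimes W_B(a,-b))\,\xi_{e^{-i\d}\l'}=e^{i(ax+bp)}\,\xi_{\l'}$; since both factors are unitary, $\|\xi_{e^{i\d}\l}\|=\|\xi_\l\|$ for all $\l\in S^1$ and $\d\in\mathbb{R}$. Hence $\l\mapsto\|\xi_\l\|$ is constant on $S^1$. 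If $\psi\neq 0$ this constant is positive, so $\xi_\l\neq 0$ for uncountably many $\l$, contradicting $\sum_\l\|\xi_\l\|^2=\|\psi\|^2<\infty$, which forces at most countably many summands to be nonzero.

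The step I expect to require most care is the second one --- verifying via the Weyl CCRs that $W_A(a,b)$ unitarily intertwines the eigenspaces $E_\l$ and $E_{e^{i\d}\l}$ of $W_A(\cos\q,\sin\q)$, and using this to conclude that the spectrum exhausts $S^1$. Some additional bookkeeping arises when the eigenspaces are higher-dimensional (as in Halvorson's $\ell^2(\mathbb{R})$ representation, where the $W_A(1,0)$-eigenvalue $e^{iq_0}$ is attained by the countable family $\{|q_0+2\pi n\rangle\}_{n\in\mathbb{Z}}$), but since the component-matching is performed per eigenspace rather than per basis vector, the unitarity of the shift $E_\l\to E_{e^{i\d}\l}$ absorbs this subtlety and the final contradiction is unchanged.
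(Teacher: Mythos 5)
Your argument is correct, and it reaches the paper's conclusion by a noticeably different route. The paper expands $\psi$ in a full product basis $\{\chi_\l\otimes\xi_\m\}$, proves as a separate lemma the matrix-element implication $\langle\chi_\m, W_A(-r\sin\q, r\cos\q)\chi_\l\rangle\neq 0\Rightarrow r=\j(\m)-\j(\l)\bmod 2\pi$, and derives a contradiction by counting: for every $r\in\mathbb{R}$ and every $\n$ in the countable support $K_A$ some $\l\in K_A$ must realize the difference $r$, which is impossible. You instead work one level up, with the orthogonal decomposition $\mathcal{H}_A\otimes\mathcal{H}_B=\overline{\bigoplus_\l E_\l\otimes\mathcal{H}_B}$, and use the unitary intertwining $W_A(a,b)\colon E_\l\to E_{e^{i\d}\l}$ (which is the coordinate-free content of the paper's implication \eqref{condtheta}) to show that $\l\mapsto\|\xi_\l\|$ is constant on $S^1$, contradicting $\sum_\l\|\xi_\l\|^2=\|\psi\|^2<\infty$. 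The underlying mechanism is the same in both proofs --- the Weyl operators rotate the $\q$-eigenvalue through all of $S^1$, while a vector in a Hilbert space can have only countably many nonzero orthogonal components --- but your packaging buys two things: it avoids choosing a basis of $\mathcal{H}_B$ and the attendant ``for every $(\n,\r)$ there exists $(\l,\m)$ with a nonvanishing matrix element'' bookkeeping, and, as you note, it handles degenerate eigenspaces transparently because the matching is done per eigenspace rather than per basis vector. What the paper's version buys in exchange is a standalone single-party Lemma (an orthonormal eigenbasis for one phase-space direction excludes eigenvectors in any other direction), which it states and reuses independently of the bipartite setting.
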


Before providing the proof of this theorem, let us prove the following lemma (see also the second theorem in \cite{halvorsoncomplementarity}):
\begin{lemma}
    If $W(a,0)$ has eigenvectors and they form an orthonormal basis of $\mathcal{H}$, then there are no eigenvectors of $W(0,b)$.
\end{lemma}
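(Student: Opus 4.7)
My plan is to exploit the Weyl CCR to show that, for $b\neq 0$, the unitary $W(0,b)$ shifts the joint eigenspaces of the commuting family $\{W(a,0)\}_{a\in\mathbb{R}}$ along an infinite orbit, which forces any candidate eigenvector of $W(0,b)$ to have infinitely many components of equal positive norm and hence to vanish. (The case $b=0$ is trivial since $W(0,0)=I$, so the statement must be read as holding for every $b\neq 0$.)

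First, specializing the Weyl CCR to $(a_1,a_2,b_1,b_2)=(a,0,0,b)$ and to $(0,b,a,0)$ yields the braiding relation
\begin{equation*}
W(a,0)\,W(0,b)=e^{iab}\,W(0,b)\,W(a,0),\qquad\forall a,b\in\mathbb{R},
\end{equation*}
while taking $a_2=b_2=0$ gives $W(a,0)W(a',0)=W(a+a',0)$, so $\{W(a,0)\}_{a\in\mathbb{R}}$ is a commuting family of unitaries. By hypothesis its common eigenvectors form an orthonormal basis of $\mathcal{H}$, which yields an orthogonal decomposition $\mathcal{H}=\bigoplus_{\chi}\mathcal{H}_\chi$ into joint eigenspaces indexed by characters $\chi:\mathbb{R}\to U(1)$ with $W(a,0)\phi=\chi(a)\phi$ for every $\phi\in\mathcal{H}_\chi$. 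Applying the braiding relation to such a $\phi$ gives $W(a,0)W(0,b)\phi=e^{iab}\chi(a)\,W(0,b)\phi$, which shows that $W(0,b)$ maps $\mathcal{H}_\chi$ isometrically onto $\mathcal{H}_{\chi\cdot e_b}$, where $e_b(a):=e^{iab}$.

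Fix $b\neq 0$ and suppose, toward a contradiction, that $W(0,b)\psi=\mu\psi$ for some unit vector $\psi$ and some phase $\mu$. Decomposing $\psi=\sum_\chi\psi_\chi$ with $\psi_\chi\in\mathcal{H}_\chi$ and projecting the eigenvalue equation onto each $\mathcal{H}_\chi$ yields
\begin{equation*}
W(0,b)\,\psi_{\chi\cdot e_b^{-1}}=\mu\,\psi_\chi,
\end{equation*}
hence by unitarity of $W(0,b)$ and $|\mu|=1$ one has $\|\psi_{\chi\cdot e_b^{-1}}\|=\|\psi_\chi\|$ for every $\chi$. Iterating gives $\|\psi_{\chi\cdot e_b^{-n}}\|=\|\psi_\chi\|$ for all $n\in\mathbb{Z}$; since $b\neq 0$, the characters $\{\chi\cdot e_b^{-n}\}_{n\in\mathbb{Z}}$ are pairwise distinct, so any nonzero component $\psi_\chi$ would contribute an infinite family of equal-norm summands to $\sum_\chi\|\psi_\chi\|^2$, contradicting $\|\psi\|=1$. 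Therefore $\psi=0$.

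The main obstacle I anticipate is the case where the joint eigenspaces $\mathcal{H}_\chi$ are higher-dimensional, in which case $W(0,b)$ need not act as a simple permutation of a basis up to phases; the eigenspace decomposition above circumvents this by tracking only the \emph{norms} of the eigenspace components, which is the one piece of information unitarity supplies without any choice within each $\mathcal{H}_\chi$. A secondary subtlety is that the characters $\chi$ are not assumed continuous, so one cannot write $\chi(a)=e^{ia\lambda_\chi}$ for a scalar $\lambda_\chi$; the argument never requires this, using only that the character $e_b$ is nontrivial whenever $b\neq 0$.
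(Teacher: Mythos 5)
Your proof is correct, but it takes a genuinely different route from the paper's. The paper argues through matrix elements in the eigenbasis: from the Weyl relations it derives that $\langle \chi_\mu, W(0,b)\chi_\lambda\rangle \neq 0$ forces $b=\varphi(\mu)-\varphi(\lambda)$ (condition \eqref{cond}), expands a putative momentum eigenvector over its countable support $K$, and concludes that the uncountably many values of $b$ cannot all be realized as differences $\varphi(\mu)-\varphi(\lambda)$ with $\mu,\lambda$ drawn from the countable set $K$. That argument quantifies over the entire one-parameter family $\{W(0,b)\}_{b\in\mathbb{R}}$ and, as written, only excludes \emph{simultaneous} eigenvectors of that family. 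You instead fix a single $b\neq 0$ and use the braiding relation to show that $W(0,b)$ permutes the joint eigenspaces $\mathcal{H}_\chi$ along the infinite orbit $\chi\mapsto\chi\cdot e_b$, so that an eigenvector would have infinitely many mutually orthogonal components of equal positive norm, contradicting square-summability. This yields a strictly stronger conclusion (no individual $W(0,b)$ with $b\neq 0$ has any eigenvector), it handles degenerate eigenspaces without picking a basis inside them, and it sidesteps the paper's claim that the eigenvalue phase is linear in $a$ --- a step that, for a not-necessarily-continuous family, would need discontinuous characters of $\mathbb{R}$ to be ruled out, whereas your argument never parametrizes the characters at all. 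What the paper's route buys in exchange is reusability: the implication \eqref{cond} is precisely the form \eqref{condtheta} on which the proof of Theorem \ref{thm:epr} is built, so feeding your lemma into that proof would require redoing the orbit argument in the bipartite setting (which does work, but is not free). One small point worth making explicit in your write-up: $W(0,b)$ maps $\mathcal{H}_\chi$ into $\mathcal{H}_{\chi\cdot e_b}$ because its image consists of joint eigenvectors of character $\chi\cdot e_b$, which are orthogonal to every basis vector of any other character and hence, by completeness of the basis, lie in the closed span of the basis vectors carrying the character $\chi\cdot e_b$.
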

\begin{proof}
    Let $ \{ \chi_\lambda : \lambda \in \mathbb{R} \}$ be the orthonormal basis for $\mathcal{H}$ given by position eigenvectors, i.e. $W(a,0) \chi_\lambda = f(a,\lambda)  \chi_\lambda$, for all $a \in \mathbb{R}$ and $\lambda \in \mathbb{R}$. By unitarity, $f$ must be a phase. By the group property $W(a,0) W(b,0) = W(a+b,0)$, the phase must be linear in $a$, so that $W(a,0) \chi_\lambda = e^{i a \varphi(\lambda)}  \chi_\lambda$. On the other hand, from the Weyl CCRs, one can check that the action of $W(0,b)$ on the same basis is given by
    \begin{align*}
        \langle \chi_\m ,  W(0,b) \chi_\l \rangle = e^{iab } e^{i a \j (\l) } e^{- i a \j ( \m)} \langle  \chi_\m ,  W(0,b) \chi_\l \rangle \, .
    \end{align*}
    Then, if $\langle \chi_\m  , W(0,b) \chi_\l \rangle \neq 0$, we have that
    \begin{align*}
        1 = e^{i ab} e^{i a \j (\l) } e^{- i a \j ( \m)} \, , \quad \forall a \in \mathbb{R} \, .
    \end{align*}
    Therefore, we have shown that
    \begin{align}
        \langle  \chi_\m , W(0,b) \chi_\l \rangle \neq 0 \, \, \Rightarrow \, \, b = \j(\m) - \j(\l)  \, \, \text{mod} \, \, 2 \pi \, .
    \label{cond}
    \end{align}
    Now suppose there exists an eigenvector of momentum $\psi \in \mathcal{H}$. Since it is a vector, it can be written in the position basis as $\psi = \sum_{\l \in \mathbb{R}} \langle  \chi_\l , \psi \rangle \chi_\l$, where $K := \{ \l \in \mathbb{R} : \langle  \chi_\l , \psi \rangle \neq 0 \}$ is countable. On the other hand, being a momentum eigenvector it must satisfy $W(0,b) \psi = e^{i b \g} \psi$ for all $b \in \mathbb{R}$ and for some $\g \in \mathbb{R}$, which can be written as
    \begin{align*}
        \sum_{\l \in \mathbb{R}} \langle  \chi_\l , \psi \rangle \, W(0,b) \chi_\l = e^{i b \g} \sum_{\m \in \mathbb{R}} \langle  \chi_\m , \psi \rangle \, \chi_\m \, , \quad \forall b \in \mathbb{R} \, .
    \end{align*}
    Now, let us pick some $\m \in K$ and take the inner product of the previous equation with $\chi_\m$, obtaining
    \begin{align*}
        \sum_{\l \in \mathbb{R}} \langle \chi_\l , \psi \rangle \, \langle \chi_\m  , W(0,b) \chi_\l \rangle = e^{i b \g}  \langle \chi_\m , \psi \rangle \neq 0 \, , \quad \forall b \in \mathbb{R} \, .
    \end{align*}
    In particular there is always a non-zero element in the sum of the left-hand side, so that, for every $\m \in K$, there is some $\l \in K$ such that 
    \begin{align*}
        \langle \chi_\m , W(0,b) \chi_\l  \rangle \neq 0 \, .
    \end{align*}
    In turn, by \eqref{cond}, this implies $b = \j(\m)- \j(\l) \, \, \text{mod} \, \, 2 \pi$. So we have shown that the existence of an eigenvector of momentum implies that $\forall b \in \mathbb{R}$ and $\forall \m \in K$, there exists $\l \in K$ such that
    \begin{align*}
        b = \j (\m) - \j(\l) \, \,  \text{mod} \, \, 2 \pi \, .
    \end{align*}  
    But this is impossible: $b$ on the left hand side runs through an uncountable set ($\mathbb{R}$), whereas $\j$ on the right hand side is only evaluated on a countable set ($K$), and therefore there cannot be a one-to-one correspondence between elements of one type and elements of the other.
\end{proof}
This lemma can be generalized to arbitrary non-parallel directions in phase space: if $W(\cos \q, \sin \q)$ (which corresponds to the exponential of the phase-space observable at an angle $\q$) has eigenvectors and they form an orthonormal basis, then for any different angle $\q'$ the operator $W(\cos \q', \sin \q')$ has no eigenvectors. With this we are ready to provide the proof of the theorem. \\

\begin{proof}[Proof of Theorem \ref{thm:epr}]
    Let us assume, without loss of generality, that $x=p=0$, and let us falsify a necessary condition for \ref{condii} and \ref{condiii}, namely
\begin{align}
    \Big( W_A(a,0) \otimes W_B(a,0) \Big) \Big( W_A(0,b) \otimes W_B(0,-b) \Big) \psi = \psi 
\label{star}
\end{align}
for all  $a,b \in \mathbb{R}$. Let us introduce polar coordinates $a = r \cos \q$ and $b = r \sin \q$, and let $\{ \chi_\l : \l \in \mathbb{R} \}$ be an orthonormal basis for $\mathcal{H}_A$ of $\q$-eigenvectors, i.e.
\begin{align*}
    W_A(r \cos \q ,r \sin \q ) \chi_\l = e^{i r \j ( \l )} \chi_\l \, , \quad \forall r \in \mathbb{R} \, .
\end{align*}
Again, the Weyl CCRs imply that in the orthogonal direction we have the implication
\begin{align}
    & \langle \chi_\m , W_A(- r \sin \q ,r \cos \q ) \chi_\l \rangle \neq 0 \nonumber \\
    & \qquad \qquad \qquad  \Rightarrow \, \,  r = \j(\m) - \j(\l) \, \, \text{mod} \, \, 2 \pi  \, .
\label{condtheta}
\end{align}
Let $\{ \xi_\l : \l \in \mathbb{R} \}$ be some arbitrary orthonormal basis for $\mathcal{H}_B$. Then $\{ \chi_\l \otimes \xi_\m : \l, \m \in \mathbb{R} \}$ is an orthonormal basis for $\mathcal{H}_A \otimes \mathcal{H}_B$. Writing $\psi$ in this basis we have
\begin{align*}
    \psi = \sum_{\l,\m \in \mathbb{R}} \langle  \chi_\l \otimes \xi_\m , \psi \rangle \, \chi_\l \otimes \xi_\m \, ,
\end{align*}
and \eqref{star} reads
\begin{widetext}
\begin{align}
    \sum_{\l,\m \in \mathbb{R}}   \langle  \chi_\l \otimes \xi_\m  , \psi \rangle \,  W_A(a,0) W_A(0,b) \chi_\l \otimes W_B(a,0) W_B(0,-b) \xi_\m =  \sum_{\nu,\r \in \mathbb{R}}  \langle  \chi_\n \otimes \xi_\r , \psi \rangle \, \chi_\n \otimes \xi_\r \, , \quad \forall a ,b \in \mathbb{R} \, .
\label{starp}
\end{align}
\end{widetext}
Let $K:= \{ (\l,\m) \in \mathbb{R}^2 : \langle  \chi_\l \otimes \xi_\m  , \psi \rangle \neq 0 \}$, which is countable, and let us take the inner product of \eqref{starp} with $\chi_\n \otimes \xi_\r$, where $(\n,\r) \in K$, obtaining
\begin{widetext}
\begin{align*}
    \sum_{\l,\m \in \mathbb{R}}   \langle \chi_\l \otimes \xi_\m  , \psi \rangle \,  \langle \chi_\n ,  W_A(a,b) \chi_\l \rangle \,  \langle \xi_\r , W_B(a,-b) \xi_\m  \rangle  =   \langle  \chi_\n \otimes \xi_\r , \psi \rangle \neq 0 \, , \quad \forall a,b \in \mathbb{R} \, .
\end{align*}
\end{widetext}
Then, in particular, for every $a,b \in \mathbb{R}$ there is a nonzero element in the sum, i.e. for every $(\n,\r) \in K$ there exists $(\l,\m) \in K$ such that  
\begin{align*}
    \langle \chi_\n , W_A(a,b) \chi_\l  \rangle \neq 0 \, .
\end{align*}
Defining $K_A:= \{ \l \in \mathbb{R} : \exists \m \in \mathbb{R} : \langle \chi_\l \otimes \xi_\m ,  \psi \rangle \neq 0 \}$, which is also countable, the existence of the EPR state implies that for all $a,b \in \mathbb{R}$ and for all $\n \in K_A$ there exists $\l \in K_A$ such that 
\begin{align*}
     \langle  \chi_\n, W_A(a,b) \chi_\l \rangle \neq 0 \, .
     \label{eq}
\end{align*}
Again, by \eqref{condtheta}, and since $K_A$ is countable, this is impossible.
\end{proof}

In conclusion, the EPR state cannot be described within the standard Hilbert space formalism, at least for the family of representations of the CCRs considered. It remains an open question whether a different representation of the CCRs that respects the system bipartition contains an EPR state, and whether such a state (or any other in Hilbert space) possesses stronger-than-separable correlations.

\emph{Acknowledgments.}---  We would like to thank Sebastian Horvat and Borivoje Daki\'c for their helpful comments and ideas. The author acknowledges support from  the Austrian Science Fund (FWF) through BeyondC-F7112 and from the ESQ Discovery programme (Erwin Schrödinger Center for Quantum Science \& Technology), hosted by the Austrian Academy of Sciences
(ÖAW).
\bibliography{references}

\end{document}